\newcommand{\nop}[1]{}
\newtheorem{theorem}{Theorem}
\newtheorem{corollary}{Corollary}
\newtheorem{lemma}{Lemma}
\begin{document}

\title{Performance Bounds for Multiclass FIFO in Communication Networks: A Deterministic Case}         % Enter your title between curly braces
\author{Yuming Jiang}        % Enter your name between curly braces
%\date{\today}          % Enter your date or \today between curly braces
\maketitle

\begin{abstract}
Multiclass FIFO is used in communication networks such as in input-queueing routers/switches and in wireless networks. For the concern of providing service guarantees in such networks, it is crucial to have analytical results, e.g. bounds, on the performance of multi-class FIFO. Surprisingly, there are few such results in the literature. This paper is devoted to filling the gap. Specifically, a single hop deterministic case is studied, for which, delay and backlog bounds are derived, in addition to guaranteed rate and service curve characterizations that may be exploited to extend the analysis to network cases. 
\end{abstract}

\section{Introduction}\label{sec-1}

Multiclass FIFO refers to the scheduling discipline where packets (or customers in the queueing theory jargon) are served in the first-in-first-out (FIFO) manner and the service rates to packets of different classes may differ. 

Multiclass FIFO is perhaps the most intuitive scheduling discipline for many scenarios in communication networks. One example is input-queueing on a router/switch, where packets are FIFO-queued at each input port before being forwarded to the corresponding output ports that may have different capacities. Another example is downlink-sharing in wireless networks, where a wireless base station, shared by multiple users, sends packets to these users in the FIFO manner while the characteristics of the wireless channel seen by packets of different users may be different. 

For the concern of providing service guarantees in such networks, it is crucial to have analytical results, e.g. bounds, on the performance of multi-class FIFO. Surprisingly, there are few such results in the literature. In addition, as to be shown in the paper, the bounds that can be directly deduced from the literature results require conditions that may be too restrictive, limiting the applicability of these bounds. This paper is devoted to filling this gap. The focus is on a deterministic case. Specifically, both delay and backlog bounds are derived. In addition, the guaranteed rate and service curve characterizations for each class are also obtained, which may be further exploited to extend the analysis to network cases. 

The rest is organized as follows. In the next section, the system model and performance metrics are introduced. In Sec.~\ref{sec-3}, the direct bounds are presented and their limitations discussed. In Sec.~\ref{sec-4}, improved bounds with much relaxed conditions are derived and discussed. Concluding remarks are given in Sec.~\ref{sec-5}.

\section{The System Model}\label{sec-2} 

\subsection{The System Model}

We consider a work-conserving multiclass system serving packets in a communication network. It is a discrete-time system with time indexed by $t=0, 1, 2, \dots$, where the length of the time unit is not specified, but is supposed to be small enough to even count one bit service time. There are $N$ classes of packets. Packets are served in the FIFO manner. If multiple packets arrive at the same time, the tie is broken arbitrarily. The service rate (in bps) of each class, denoted by $C_n$, is assumed to be constant. Let $C_{min} = \min_{n}\{C_n\}$ and $C_{max} = \max_{n}\{C_n\}$.

By convention, a packet is said to have arrived to (respectively served by) the system when and only when its last bit has arrived to (respectively left) the system. When a packet arrives seeing the system busy, the packet will be queued and the buffer size for the queue is assumed to be large enough ensuring no packet loss. The queue is initially empty. 

For each class $n$, let $f_n$ denote its traffic flow and $p^{f_n,i}$ the $i$th packet $(i = 1, 2, \dots)$ of the flow. For each packet $p^{f_n,i}$, we denote by $a^{f_n,i}$ its arrival time to the system, $d^{f_n,i}$ its departure time from the system, and $l^{f_n,i}$ its length (in bits). For the {\em aggregate flow} of $f_n$, $(n=1, \dots, N)$, denoted by $g$, we use $p^{g,j}$, $a^{g,j}$, $d^{g,j}$ and $l^{g,j}$ to respectively denote the $j$th packet of the aggregate flow, its arrival time, its departure time and its length. Let $L_n$ be the maximum packet length of flow $f_n$ and $L = \max_{n}\{L_n\}$. 

In addition, we use $A_n(s,t)$ and $A_n^{*}(s,t)$ to respectively denote the amount of traffic (in bits) that has arrived and departed from flow $f_n$ within time period $[s, t]$, and let $A_n(t) = A_n(0,t)$ and $A_n^{*}(t) = A_n^{*}(0,t)$. For the aggregate flow, $A(t)$ and $A^{*}(t)$ are similarly defined.

The traffic of each class is assumed to be constrained by a leaky-bucket {\em arrival curve} \cite{NetCal} $\alpha_n(t) = r_n t + \sigma_n$, or in other words: for all $t \ge s \ge 0$, 
$$
A_n(s, t) \le \alpha_n(t-s) = r_n (t-s) + \sigma_n.
$$

\subsection{Performance Metrics of Interest}
In this paper, we focus on finding bounds on packet delay, backlog, guaranteed rate and service curve, which are defined as follows.

(i) The delay of packet $p^{f_n,i}$, denoted by $D^{f_n,i}$, is naturally 
\begin{eqnarray}
D^{f_n,i} &=& d^{f_n,i} - a^{f_n,i}.
\end{eqnarray}

\nop{
In addition, we define the virtual delay at time $t$ as 
\begin{eqnarray}
D(t) &=&  \inf \{\tau: A^{*}(t+\tau) \ge A(t) \} .
\end{eqnarray}
Due to FIFO, a bound on $D^{f_n,i}$ can also be found from the following, where the equation holds if there is no concurrent arrival at $a^{f_n,i}$ %\footnote{If $A(t)$ and $A^{*} (t) $ are defined on $[0, t)$, this virtual delay definition defines the virtual waiting time for a (possible virtual) arrival at time $t$.}
\begin{equation}
D^{f_n,i} \le D(a^{f_n,i}). %=\inf \{\tau: A^{*f}(a^{f_n,i}+\tau) \ge A^{f}(a^{f_n,i}) \} . &&
\end{equation}
}

(ii) The backlog at time $t$, denoted by $B(t)$, is defined as:
$$
B(t) = A(t) - A^{*}(t).
$$

(iii) A system is said to be a Guaranteed Rate (GR) server with rate $R_n$ and error term $E_n$ to traffic class $n$ or flow $f_n$, if it guarantees that for any packet $p^{f_n,i}$ of the flow, its departure time satisfies \cite{GR}:
\begin{equation}
d^{f_n,i} \le GRC^{f_n,i}(R_n) + E_n \label{def-gr}
\end{equation}
where the guaranteed rate clock (GRC) function is iteratively defined as: 
\begin{equation} \label{eq-grc}
GRC^{f_n,i}(R_n) = max\{a^{f_n,i}, GRC^{f_n, i-1}\} + \frac{l^{f_n,i}}{R_n},
\end{equation}
with $GRC^{f_n,0} = 0$.

(iv) A system is said to provide a {\em service curve} $\beta_n(t)$ to traffic class $n$ or flow $f_n$ if there holds \cite{NetCal}: for all $t \ge 0$, 
\begin{equation}
A_n^{*} \ge A_n \otimes \beta_n (t) \label{sc}
\end{equation}
where \nop{the min-plus convolution, denoted by $\otimes$, of functions $A(\cdot)$ and $\beta(\cdot)$, is defined as:}
%\begin{equation}\label{mip1}
$A_n \otimes \beta_n (t) \equiv \inf_{0 \le s \le t}\{A_n(s) + \beta_n(t-s)\}. $

\section{Direct Results and the Limitations} \label{sec-3}
In this section, we introduce performance bounds that can be derived by making easy application of the literature results. 

\begin{lemma}\label{lm-gra}
The multiclass FIFO system is a Guaranteed Rate server with rate $R=C_{min}$ and error term $E=0$ to the aggregate input.
\end{lemma}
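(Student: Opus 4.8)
The plan is to establish the Guaranteed Rate property directly from the defining recursion \eqref{eq-grc} of the guaranteed-rate clock, by comparing it term by term with a recursion satisfied by the actual departure times of the aggregate flow $g$. Since the claimed error term is $E=0$, it suffices to prove $d^{g,j} \le GRC^{g,j}(C_{min})$ for every aggregate packet $p^{g,j}$, and I would do this by induction on $j$.

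First I would derive the departure recursion. Let $n(j)$ denote the class of the $j$th aggregate packet and let $s^{g,j}$ be its service start time. Because the system is FIFO and work-conserving, $p^{g,j}$ cannot begin service before its predecessor $p^{g,j-1}$ has departed, and the server never idles while $p^{g,j}$ is waiting with $p^{g,j-1}$ already gone; hence $s^{g,j} = \max\{a^{g,j}, d^{g,j-1}\}$. Since the rate seen by a class-$n(j)$ packet is the constant $C_{n(j)}$, its transmission consumes time $l^{g,j}/C_{n(j)}$, so
\begin{equation*}
d^{g,j} = \max\{a^{g,j}, d^{g,j-1}\} + \frac{l^{g,j}}{C_{n(j)}}.
\end{equation*}
The only place the multiclass nature enters is this per-packet service time, and the key bound is simply $C_{n(j)} \ge C_{min}$, giving $l^{g,j}/C_{n(j)} \le l^{g,j}/C_{min}$.

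With the recursion in hand, the induction is routine. For the base case, the first packet meets the initially empty queue, so $d^{g,1} = a^{g,1} + l^{g,1}/C_{n(1)} \le a^{g,1} + l^{g,1}/C_{min} = GRC^{g,1}(C_{min})$, using $GRC^{g,0}=0$ and $a^{g,1} \ge 0$. For the inductive step, assuming $d^{g,j-1} \le GRC^{g,j-1}(C_{min})$, monotonicity of the maximum together with the service-time bound yields
\begin{equation*}
d^{g,j} \le \max\{a^{g,j}, GRC^{g,j-1}(C_{min})\} + \frac{l^{g,j}}{C_{min}} = GRC^{g,j}(C_{min}),
\end{equation*}
which is exactly the GR condition \eqref{def-gr} with $R=C_{min}$ and $E=0$.

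I expect the main obstacle to lie not in the algebra but in justifying the departure recursion rigorously from the stated conventions: that arrivals and departures are timed by the last bit, that ties among simultaneous arrivals are broken arbitrarily, and that work-conservation forbids idling only when a backlog is present. I would argue that, whatever arbitrary order is imposed on a batch of simultaneous arrivals, relabeling the aggregate packets consistently with that order preserves both the FIFO service sequence and the identity $s^{g,j}=\max\{a^{g,j},d^{g,j-1}\}$; the bound then holds for every admissible tie-breaking, and in particular for the one actually realized.
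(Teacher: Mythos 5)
Your proposal is correct and follows essentially the same route as the paper: the paper's proof rests on exactly the departure recursion $d^{g,j} = \max\{a^{g,j}, d^{g,j-1}\} + l^{g,j}/C_{*}$ together with $C_{*} \ge C_{min}$, and then asserts $d^{g,j} \le GRC^{g,j}(C_{min})$ as ``easily verified.'' Your explicit induction on $j$ (and your care about tie-breaking among simultaneous arrivals) simply spells out that verification, so it is a fully detailed version of the paper's own argument rather than a different one.
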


\begin{proof}
For the multiclass FIFO system, there holds for every packet $p^{g,j}$ of the aggregate flow, $j=1, 2 \dots$,
\begin{eqnarray}
d^{g,j} &=& \max \{a^{g,j}, d^{g,j-1} \} + \frac{l^{g,j}}{C_{*}} \label{eq-vtf}
\end{eqnarray}
where $C_{*}$ is the service rate of the class of packet $p^{g,j}$. 

Since $C_{*} \ge C_{min}$, it can be easily verified:
$$
d^{g,j} \le GRC^{g,j}(C_{min})
$$
with which, the lemma follows from the definition of GR.
\end{proof}

With Lemma \ref{lm-gra}, the following result follows from the relationship between GR and service curve \cite{Jiang03} \cite{NetCal}.

\begin{lemma}\label{lm-sca}
The multiclass FIFO system provides to the aggregate input a service curve $\beta(t) = (C_{min} \cdot t - L)^{+}$.
\end{lemma}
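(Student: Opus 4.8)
The plan is to obtain Lemma~\ref{lm-sca} as a direct corollary of Lemma~\ref{lm-gra} together with the standard conversion between the Guaranteed Rate model and the service curve model established in \cite{Jiang03} and \cite{NetCal}. That conversion states that a server which is GR with rate $R$ and error term $E$ for a flow whose packets have maximum length $L$ offers to that flow the min-plus service curve $\beta(t) = \left( R\,(t-E) - L \right)^{+}$, equivalently $R\,[\,t - E - L/R\,]^{+}$. Applying this to the aggregate flow $g$, for which Lemma~\ref{lm-gra} supplies $R = C_{min}$ and $E = 0$, and noting that the maximum packet length of the aggregate is $L = \max_n L_n$, yields $\beta(t) = (C_{min}\cdot t - L)^{+}$ at once. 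So at the level of a proof proposal, the first and cleanest route is simply to cite the conversion and substitute the two parameters.

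If instead one wants the statement self-contained, I would reprove the conversion in this specific setting. Fix a time $t$ and let $p^{g,n}$ be the first aggregate packet that has not departed by $t$, so that $A^{*}(t) \ge \sum_{j=1}^{n-1} l^{g,j}$ and, by Lemma~\ref{lm-gra}, $t < d^{g,n} \le GRC^{g,n}(C_{min})$. Next I would unfold the recursion \eqref{eq-grc}: letting $k \le n$ be the largest index at which the clock resets, i.e. $a^{g,k} \ge GRC^{g,k-1}$, one gets the closed form $GRC^{g,n} = a^{g,k} + \frac{1}{C_{min}} \sum_{i=k}^{n} l^{g,i}$. Combining this with $t < GRC^{g,n}$ gives $\sum_{i=k}^{n} l^{g,i} > C_{min}\,(t - a^{g,k})$, and peeling off the single in-service packet $p^{g,n}$ (whose length is at most $L$) turns this into a lower bound on the bits departed strictly between the busy-period start and $t$. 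Choosing the reference instant $s_0 = a^{g,k}$ and using that, by the FIFO arrival order, only packets $1, \dots, k-1$ have arrived strictly before $a^{g,k}$, I would assemble $A^{*}(t) \ge A(s_0) + (C_{min}(t - s_0) - L)^{+}$, which by the definition of $\otimes$ gives $A^{*}(t) \ge A \otimes \beta(t)$.

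The step I expect to be the main obstacle is the bookkeeping of the single packet-length penalty $L$. The $-L$ slack has to cover the residual of the boundary packet $p^{g,n}$, which is already reflected in the clock but not yet in $A^{*}(t)$; getting away with one $L$ rather than two forces some care about how arrivals are counted at the instant $s_0 = a^{g,k}$, i.e. whether packet $k$ itself is included in $A(s_0)$. The clean fix is to evaluate the arrival side just before $a^{g,k}$ (letting the reference time approach $a^{g,k}$ from below, which the infimum in $A \otimes \beta$ accommodates), so that $A(s_0) = \sum_{j=1}^{k-1} l^{g,j}$ and the lengths of packet $k$ and packet $n$ do not both need to be absorbed. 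The remaining routine checks are the degenerate cases: an empty system, or a $t$ so small that $C_{min} t \le L$, where $\beta(t-s) = 0$ and the bound holds trivially by taking $s_0 = 0$ with $A(0)=0$.
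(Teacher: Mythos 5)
Your first route is exactly the paper's proof: the paper establishes Lemma~\ref{lm-sca} by citing the GR-to-service-curve conversion of \cite{Jiang03, NetCal} applied to Lemma~\ref{lm-gra} with $R = C_{min}$, $E = 0$, and maximum aggregate packet length $L$, precisely as you do. Your optional self-contained argument (unfolding the GRC recursion from the last reset index $k$ and evaluating the arrival process just before $a^{g,k}$ to absorb only a single packet-length penalty) is just the standard proof of that cited conversion and is sound, so there is nothing genuinely different here.
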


With Lemmas \ref{lm-gra} and \ref{lm-sca}, the following corollaries follow immediately from the related literature results. Specifically, Cor.~\ref{co-db}, Cor.~\ref{co-bb} and Cor.~\ref{co-losc} follow respectively from the delay bound, backlog bound and leftover service results of the deterministic network calculus \cite{Chang00, NetCal}. In addition, Cor.~\ref{co-gr} further follows from the relationship between GR and service curve \cite{Jiang03, NetCal}.

\begin{corollary}\label{co-db}
(i) {\bf Delay Bound:} If $\sum_{n} r_n \le C_{min}$, then the delay of any packet $p^{g,j}$ is bounded by: %\sum_{n=1}^{N}
$
D^{g,j} \le \frac{\sum_{n} \sigma_n}{C_{min}}.
$
\end{corollary}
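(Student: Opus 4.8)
The plan is to route the argument through the Guaranteed-Rate characterization of Lemma~\ref{lm-gra} rather than the service-curve characterization, since the stated bound $\sum_n\sigma_n/C_{min}$ is exactly the GR delay bound for leaky-bucket input and carries no extra latency term. First I would aggregate the per-class constraints: since the aggregate arrival satisfies $A(s,t)=\sum_n A_n(s,t)$ and each class obeys $A_n(s,t)\le r_n(t-s)+\sigma_n$, the aggregate flow $g$ admits the arrival curve $\alpha(t)=\big(\sum_n r_n\big)t+\sum_n\sigma_n$. By Lemma~\ref{lm-gra} the system is a GR server for $g$ with rate $R=C_{min}$ and error term $E=0$, so every aggregate packet obeys $d^{g,j}\le GRC^{g,j}(C_{min})$, and hence $D^{g,j}=d^{g,j}-a^{g,j}\le GRC^{g,j}(C_{min})-a^{g,j}$.

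Next I would unroll the recursion~\eqref{eq-grc} down to $GRC^{g,0}=0$, which collapses the nested maxima into the closed form
\begin{equation}
GRC^{g,j}(C_{min})=\max_{1\le k\le j}\Big\{\,a^{g,k}+\frac{1}{C_{min}}\sum_{i=k}^{j}l^{g,i}\,\Big\},
\end{equation}
so that $D^{g,j}\le \max_{1\le k\le j}\big\{\,a^{g,k}-a^{g,j}+\tfrac{1}{C_{min}}\sum_{i=k}^{j}l^{g,i}\,\big\}$. The third step is to control each length sum by the arrival curve: the packets $p^{g,k},\dots,p^{g,j}$ all arrive within the window $[a^{g,k},a^{g,j}]$, so $\sum_{i=k}^{j}l^{g,i}\le A(a^{g,k},a^{g,j})\le\big(\sum_n r_n\big)(a^{g,j}-a^{g,k})+\sum_n\sigma_n$. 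Substituting and writing $\Delta=a^{g,j}-a^{g,k}\ge 0$ bounds the $k$-th term by $\Delta\big(\tfrac{\sum_n r_n}{C_{min}}-1\big)+\tfrac{1}{C_{min}}\sum_n\sigma_n$.

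Finally I would invoke the hypothesis $\sum_n r_n\le C_{min}$, which makes the coefficient of $\Delta$ nonpositive; the per-$k$ bound is then nonincreasing in the window length, the maximum over $k$ is attained as $\Delta\to 0$, and we are left with $D^{g,j}\le\sum_n\sigma_n/C_{min}$, as claimed. The step I expect to be the main obstacle — and the one needing careful statement — is the bookkeeping $\sum_{i=k}^{j}l^{g,i}\le A(a^{g,k},a^{g,j})$: it relies on the convention that every bit of packets $k$ through $j$ is counted as arriving inside the closed window $[a^{g,k},a^{g,j}]$, and it is precisely this accounting that prevents an extra $+L/C_{min}$ term from surfacing (such a term is exactly what the horizontal-deviation argument based on the service curve $\beta(t)=(C_{min}t-L)^{+}$ of Lemma~\ref{lm-sca} would yield). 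Thus the role of the rate condition is transparent: it forces the worst case onto the instantaneous burst captured by $\sum_n\sigma_n$ rather than a sustained backlog period.
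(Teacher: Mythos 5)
Your proof is correct, and it in fact pins down a subtlety that the paper's own one-line justification glosses over. The paper derives Cor.~\ref{co-db} by citing the ``delay bound results of the deterministic network calculus'' on top of Lemmas~\ref{lm-gra} and \ref{lm-sca}; but the network-calculus delay bound is the horizontal deviation of the arrival curve against a service curve, and applied to $\beta(t)=(C_{min}\cdot t-L)^{+}$ from Lemma~\ref{lm-sca} it yields $(\sum_n\sigma_n+L)/C_{min}$ --- carrying exactly the spurious $L/C_{min}$ latency term you identified. The stated bound without that term is the standard Guaranteed-Rate delay bound (a GR server with rate $R$ and error $E$ fed leaky-bucket $(\sigma,r)$ traffic with $r\le R$ gives delay at most $\sigma/R+E$), i.e., what Lemma~\ref{lm-gra} delivers with $R=C_{min}$ and $E=0$; your unrolling of the recursion \eqref{eq-grc} into $GRC^{g,j}(C_{min})=\max_{1\le k\le j}\bigl\{a^{g,k}+\frac{1}{C_{min}}\sum_{i=k}^{j}l^{g,i}\bigr\}$ is precisely the proof of that GR result, so you have supplied inline the literature lemma the paper leaves to a citation --- and chosen the correct one of the two routes the paper's phrasing conflates. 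The bookkeeping step you flag as the main obstacle is indeed the crux, but it is resolved by the paper's own conventions: aggregate packets are indexed in FIFO (hence nondecreasing-arrival-time) order and a packet is deemed to arrive only when its last bit does, so $p^{g,k},\dots,p^{g,j}$ all arrive within the closed window $[a^{g,k},a^{g,j}]$, over which $A$ is defined and the arrival-curve constraint is stated (in particular $A(s,s)\le\sum_n\sigma_n$ absorbs simultaneous bursts); this gives $\sum_{i=k}^{j}l^{g,i}\le A(a^{g,k},a^{g,j})$ exactly as you need, and the rest of your argument (nonpositive coefficient of $\Delta$ under $\sum_n r_n\le C_{min}$) goes through.
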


\begin{corollary}\label{co-bb}
(ii) {\bf Backlog Bound:} If $\sum_{n} r_n \le C_{min}$, then the backlog at any time $t$ is bounded by:
$
B(t) \le \sum_{n} \sigma_n + (\sum_{n} r_n) \frac{L}{C_{min}} .
$
\end{corollary}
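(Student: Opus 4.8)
The plan is to invoke the standard backlog bound of deterministic network calculus, which states that when a flow constrained by an arrival curve $\alpha$ is served by a system offering a service curve $\beta$, the backlog satisfies $B(t) \le \sup_{s \ge 0}\{\alpha(s) - \beta(s)\}$, the maximal vertical distance between the two curves. This reduces the problem to supplying two ingredients for the aggregate flow $g$: a service curve and an arrival curve. The service curve is already delivered by Lemma \ref{lm-sca}, namely $\beta(t) = (C_{min}\, t - L)^{+}$. For the arrival curve, I would note that since $A(s,t) = \sum_n A_n(s,t)$ and each class obeys $A_n(s,t) \le r_n(t-s) + \sigma_n$, the aggregate immediately admits the arrival curve $\alpha(t) = \sum_n \alpha_n(t) = (\sum_n r_n)\, t + \sum_n \sigma_n$.

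With both curves identified, I would substitute them into the vertical-distance formula and evaluate the supremum explicitly. Since $\beta$ has a kink at $s = L/C_{min}$, the natural move is to split the analysis into the regimes $s \le L/C_{min}$ and $s > L/C_{min}$. On the first, $\beta(s) = 0$, so $\alpha(s) - \beta(s) = (\sum_n r_n)\, s + \sum_n \sigma_n$ is nondecreasing and peaks at the kink. On the second, $\alpha(s) - \beta(s) = (\sum_n r_n - C_{min})\, s + \sum_n \sigma_n + L$, an affine function whose slope is $\sum_n r_n - C_{min}$.

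This is precisely where the hypothesis enters and where the only genuine subtlety resides: the stability condition $\sum_n r_n \le C_{min}$ forces this slope to be nonpositive, so the distance is nonincreasing beyond the kink, the supremum is finite, and it is attained exactly at $s = L/C_{min}$. Evaluating there yields $(\sum_n r_n)\frac{L}{C_{min}} + \sum_n \sigma_n$ from both regimes, which is the asserted bound. I would emphasize in the writeup that the hypothesis is not merely convenient but essential, since without it the vertical distance would diverge and no finite backlog bound of this form could hold.
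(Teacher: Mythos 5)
Your proposal is correct and follows exactly the paper's route: the paper proves Cor.~\ref{co-bb} by citing the standard network-calculus backlog bound applied to the service curve $\beta(t) = (C_{min}\, t - L)^{+}$ of Lemma~\ref{lm-sca} and the aggregate arrival curve $\alpha(t) = (\sum_n r_n)\, t + \sum_n \sigma_n$. Your only addition is carrying out the vertical-distance computation explicitly (supremum attained at the kink $s = L/C_{min}$ under $\sum_n r_n \le C_{min}$), which the paper leaves to the cited literature, and your evaluation is accurate.
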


%Further based on the network calculus literature, the following service curve for each traffic class is readily obtained.

\begin{corollary}\label{co-gr}
(iii) {\bf Guaranteed Rate to Each Traffic Class:} For each traffic class $n$, if $\sum_{m \neq n} r_m < C_{min}$, then the system is a guaranteed rate server to it with rate $R_n=C_{min}-\sum_{m \neq n} r_m$ and error term
$
E_n = \frac{\sum_{m \neq n} \sigma_m + L}{C_{min}-\sum_{m \neq n} r_m}.
$
\end{corollary}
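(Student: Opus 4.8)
The plan is to obtain Corollary~\ref{co-gr} in two stages: first isolate a per-class (leftover) service curve for flow $f_n$, and then convert that service curve into a guaranteed-rate statement through the GR--service-curve correspondence of \cite{Jiang03, NetCal}. The natural starting point is Lemma~\ref{lm-sca}, which already supplies the aggregate service curve $\beta(t)=(C_{min}\,t-L)^{+}$. The idea is to regard all classes $m\neq n$ as cross traffic competing with $f_n$ for this aggregate service.

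First I would invoke the blind-multiplexing leftover service curve result of deterministic network calculus: if a server offers service curve $\beta$ to an aggregate and every interfering flow $f_m$ ($m\neq n$) is constrained by its arrival curve $\alpha_m(t)=r_m t+\sigma_m$, then $f_n$ is guaranteed the service curve $\beta_n(t)=\bigl(\beta(t)-\sum_{m\neq n}\alpha_m(t)\bigr)^{+}$. The point that justifies its use here is that this bound is scheduling-agnostic---it holds for the worst-case multiplexing and hence in particular for FIFO---so nothing about FIFO beyond Lemmas~\ref{lm-gra}--\ref{lm-sca} is needed.

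Next I would simplify $\beta_n$. Substituting $\beta(t)=(C_{min}t-L)^{+}$ and $\sum_{m\neq n}\alpha_m(t)=(\sum_{m\neq n}r_m)t+\sum_{m\neq n}\sigma_m$, a short computation collapses $\beta_n$ into a clean rate-latency curve $\beta_n(t)=R_n\,(t-T_n)^{+}$ with $R_n=C_{min}-\sum_{m\neq n}r_m$ and $T_n=(L+\sum_{m\neq n}\sigma_m)/R_n$. Here the hypothesis $\sum_{m\neq n}r_m<C_{min}$ is precisely what forces $R_n>0$, giving a non-degenerate curve and a finite $T_n$. One checks that $T_n$ is no smaller than the aggregate break-point $L/C_{min}$, so the outer $(\cdot)^{+}$ dominates and the single latency $T_n$ cleanly absorbs the packetization term $L$.

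Finally I would apply the service-curve-to-GR direction of \cite{Jiang03, NetCal}: a flow guaranteed a rate-latency service curve $R_n(t-T_n)^{+}$ is served by a Guaranteed Rate server with rate $R_n$ and error term $E_n=T_n$, which yields exactly $R_n=C_{min}-\sum_{m\neq n}r_m$ and $E_n=(\sum_{m\neq n}\sigma_m+L)/(C_{min}-\sum_{m\neq n}r_m)$ as claimed. The step I expect to be the main obstacle is pinning down the exact constant in this last conversion. Unlike the GR$\to$SC direction used in Lemma~\ref{lm-sca}, which loses a full maximum packet $L$, the SC$\to$GR direction must be shown to add no extra packetization penalty, so that the error term is exactly the latency $T_n$ rather than $T_n$ plus a term such as $L/R_n$. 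Verifying this---by matching the inductive definition of $GRC^{f_n,i}(R_n)$ against the level-crossing time of $A_n^{*}$ implied by $\beta_n$---is the delicate part; the remainder is bookkeeping.
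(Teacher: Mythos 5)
Your proposal is correct and follows essentially the same route as the paper, which obtains Cor.~\ref{co-gr} by combining the aggregate service curve of Lemma~\ref{lm-sca} with the leftover (blind-multiplexing) service result---yielding exactly the rate-latency curve of Cor.~\ref{co-losc}---and then the GR--service-curve relationship of \cite{Jiang03, NetCal}; the paper simply cites these literature results where you spell them out. Your closing observation is also on target: the SC$\to$GR direction adds no packetization penalty (the error term is exactly the latency $T_n$), which is precisely what makes the stated $E_n$ come out right.
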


\begin{corollary}\label{co-losc}
(iv) {\bf Service Curve to Each Traffic Class:} For each traffic class $n$, if $\sum_{m \neq n} r_m < C_{min}$, then the system provides to it a service curve 
$
\beta_n(t) = [(C_{min}-\sum_{m \neq n} r_m) \cdot t - L - \sum_{m \neq n} \sigma_m]^{+}.
$
\end{corollary}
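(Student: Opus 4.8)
The plan is to treat Corollary \ref{co-losc} as an instance of the standard leftover (residual) service curve theorem for aggregate/blind multiplexing, applied to the decomposition of the input into flow $f_n$ and the ``cross traffic'' formed by all the other classes. First I would take the aggregate service curve $\beta(t) = (C_{min} t - L)^+$ supplied by Lemma \ref{lm-sca} as the starting point and regard the traffic of the classes $m \neq n$ as a single interfering aggregate. Its arrival curve is obtained by summing the individual leaky-bucket constraints: since each $A_m$ is $\alpha_m$-constrained, their sum is constrained by $\alpha_{\bar n}(t) \equiv \sum_{m \neq n}(r_m t + \sigma_m) = (\sum_{m \neq n} r_m)\, t + \sum_{m \neq n}\sigma_m$.

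Next I would invoke the blind-multiplexing leftover theorem, which states that if the server offers a \emph{strict} service curve $\beta$ to the aggregate of $f_n$ and the interfering traffic, and the interfering traffic is $\alpha_{\bar n}$-constrained, then $f_n$ is guaranteed the service curve $[\beta(t) - \alpha_{\bar n}(t)]^+$ provided this function is wide-sense increasing. Substituting and simplifying yields $\beta_n(t) = [(C_{min} - \sum_{m \neq n} r_m)\, t - L - \sum_{m \neq n}\sigma_m]^+$, which is exactly the claimed expression; the outer $(\cdot)^+$ carried by $\beta$ is harmless because it only differs from $C_{min} t - L$ on the range where the bracket is already nonpositive. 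The hypothesis $\sum_{m \neq n} r_m < C_{min}$ enters precisely here: it makes the residual rate $C_{min} - \sum_{m \neq n} r_m$ strictly positive, so $\beta_n$ is wide-sense increasing (and nontrivial), which is what the theorem requires.

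The main obstacle is that the leftover theorem for arbitrary multiplexing requires the aggregate curve to be strict, whereas Lemma \ref{lm-sca} only asserts an ordinary service curve. I would close this gap by establishing strictness directly from work-conservation: over any aggregate backlogged interval $(s,t]$ the head-of-line packet is always being served at its class rate $C_{*} \ge C_{min}$, so bits are processed at instantaneous rate at least $C_{min}$; since at most one in-progress packet (of size at most $L$) has not yet been counted as departed, the completed output obeys $A^{*}(t) - A^{*}(s) \ge C_{min}(t-s) - L$, i.e. $\beta(t) = (C_{min} t - L)^+$ is in fact a strict service curve for the aggregate. With strictness in hand the leftover theorem applies verbatim. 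An alternative that avoids arguing strictness would be to use the FIFO-specific residual service curve parameterized by an offset and then optimize over that offset, but this produces a different, more cumbersome family of curves, so I would prefer the blind-multiplexing route since it reproduces the stated closed form.
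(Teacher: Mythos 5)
Your proposal matches the paper's own argument: the paper proves Cor.~\ref{co-losc} simply by citing the leftover-service (blind multiplexing) result of deterministic network calculus, applied to the aggregate service curve of Lemma~\ref{lm-sca} with the cross traffic bounded by $\sum_{m \neq n}(r_m t + \sigma_m)$ --- exactly your decomposition and arithmetic. Your additional verification that $(C_{min} t - L)^{+}$ is a \emph{strict} service curve for the aggregate (via work-conservation and the at-most-$L$ in-progress packet) is correct and in fact more careful than the paper, which invokes the leftover theorem without addressing the strictness hypothesis it formally requires.
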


\subsection{The Limitations} 

The alert reader may have noticed that the conditions for the performance bounds presented above may be too restrictive. Specifically, in obtaining the delay and backlog bounds, it has been required that $\sum_{n} r_n \le C_{min}$. This requirement may greatly limit the applicability of these bounds particularly when $C_n$, $n=1, \dots, N$, differ much. For a simple example where there are two classes, i.e. $N=2$, suppose $C_1=1$ Mbps and $C_2=100$ Mbps. This requirement basically limits the total traffic rate of both classes to below 1 Mbps in order to apply the delay and backlog bounds shown above.  

A similar limitation is found for the guaranteed rate and service curve to each traffic class $n$, where, it is required that $\sum_{m \neq n} r_m < C_{min}$, i.e., the total traffic rate of all other classes must be smaller than the minimum capacity $C_{min}$. Take the two-class example again. Cor.~\ref{co-gr} and Cor.~\ref{co-losc} essentially suggest that the system only guarantees the 2nd class a surprisingly low long-term service rate $1-r_1$ Mbs, even though $C_2 = 100$ Mbs!

Due to these limitations, the directly obtained bounds may have limited use or are even in contrary to intuitions. In the next section, we aim to relax the limitations and derive better bounds.

\section{Main Results} \label{sec-4}

%Due to the rather restrictive conditions, the direct results may have limited use or are even in contrary to the intuition. In this section, we aim to relax the limitations and derive better bounds. We start with summarizing the main results and then present the proofs. 
%Due to the restrictive conditions, the direct results may have limited use. In this section, we aim to relax the limitations and derive better bounds. 
We start with the main results and then present their proofs, followed by a discussion comparing them with the direct results.

\subsection{Summary of Main Results}

\begin{theorem}\label{th-db}
(i) {\bf Delay Bound:} For the multiclass FIFO system, if  $\rho \equiv \sum_{n}\frac{r_n}{C_n} \le 1$, the delay of any packet $p^{g,j}$ is bounded by:
$$
D^{g,j} \le \sum_{n}\frac{\sigma_n}{C_n}
$$
and the delay bound is tight in the sense that it may be reached. 
\end{theorem}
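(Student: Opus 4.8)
The plan is to start from the departure-time recursion \eqref{eq-vtf} and unroll it over the busy period that contains the tagged packet $p^{g,j}$. First I would locate the start of this busy period: let $p^{g,k}$ be the first packet of the maximal busy period containing $p^{g,j}$, i.e. the smallest index $k \le j$ such that the server is continuously busy throughout $[a^{g,k}, d^{g,j}]$. By definition of a busy period, every packet $p^{g,i}$ with $k < i \le j$ arrives while the server is still working on its predecessor, so $a^{g,i} \le d^{g,i-1}$ and hence $\max\{a^{g,i}, d^{g,i-1}\} = d^{g,i-1}$; for the first packet we have instead $\max\{a^{g,k}, d^{g,k-1}\} = a^{g,k}$. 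Substituting these into \eqref{eq-vtf} telescopes the recursion to
$$
d^{g,j} = a^{g,k} + \sum_{i=k}^{j} \frac{l^{g,i}}{C_{*_i}},
$$
where $C_{*_i}$ denotes the service rate of the class to which $p^{g,i}$ belongs.

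The key step is then to regroup the sum by class. Collecting $p^{g,k}, \dots, p^{g,j}$ according to their class and noting that each such class-$n$ packet arrived within $[a^{g,k}, a^{g,j}]$, the total class-$n$ volume in this window is at most $A_n(a^{g,k}, a^{g,j})$, which the leaky-bucket arrival curve bounds by $r_n (a^{g,j} - a^{g,k}) + \sigma_n$. Writing $T = a^{g,j} - a^{g,k} \ge 0$ and using $D^{g,j} = d^{g,j} - a^{g,j}$, this gives
$$
D^{g,j} = \sum_{i=k}^{j} \frac{l^{g,i}}{C_{*_i}} - T \le \sum_{n} \frac{r_n T + \sigma_n}{C_n} - T = T(\rho - 1) + \sum_{n} \frac{\sigma_n}{C_n}.
$$
Since $\rho \le 1$ and $T \ge 0$, the term $T(\rho - 1)$ is nonpositive, and the claimed bound $D^{g,j} \le \sum_{n} \sigma_n / C_n$ follows at once.

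For tightness I would exhibit an arrival pattern attaining the bound: let every class $n$ release a single burst of exactly $\sigma_n$ bits at time $0$, so that all packets share arrival time $0$ and each arrival curve is met with equality at $t=0$ (and the $r_n$ are chosen small enough that $\rho \le 1$). Because the server is work-conserving and FIFO, clearing the aggregate burst takes exactly $\sum_{n} \sigma_n/C_n$ time units, so the last packet departs at $\sum_{n} \sigma_n/C_n$ and, having arrived at time $0$, experiences delay exactly $\sum_{n} \sigma_n / C_n$.

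The main obstacle I anticipate is bookkeeping in the regrouping step rather than any deep difficulty: one must ensure that the packets summed over are precisely the class-$n$ arrivals falling in $[a^{g,k}, a^{g,j}]$, handling simultaneous arrivals and the arbitrary tie-breaking so that the arrival-curve bound is applied to the correct set, and one must confirm that the telescoping is valid across the whole busy period, in particular that no idle instant interrupts $[a^{g,k}, d^{g,j}]$. Once the busy-period decomposition is correctly set up, the hypothesis $\rho \le 1$ does the remaining work.
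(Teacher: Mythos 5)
Your proof is correct and takes essentially the same approach as the paper's: you identify the start $a^{g,k}$ of the busy period (the paper's $t^0$), bound the class-$n$ work served in the busy period by the class-$n$ arrivals in $[a^{g,k}, a^{g,j}]$ using FIFO and the leaky-bucket curves, and let $\rho \le 1$ absorb the $T(\rho-1)$ term; your telescoping of the recursion \eqref{eq-vtf} is just a more explicit derivation of the paper's work-conservation identity $d^{g,j} = t^0 + \sum_{m} T_m(t^0, d^{g,j})$ with $T_m(t^0,d^{g,j}) \le A_m(t^0,a^{g,j})/C_m$. Your tightness construction (simultaneous bursts of size $\sigma_n$ at time $0$, with the last-served packet delayed by exactly $\sum_n \sigma_n/C_n$) coincides with the paper's as well.
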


%Remark: The same delay bound applies to all classes and hence is for both the aggregate and each class.

\begin{theorem}\label{th-bb}
(ii) {\bf Backlog Bound:} For the multiclass FIFO system, if  $\rho \equiv \sum_{n}\frac{r_n}{C_n} \le 1$, the backlog at any time $t$ is bounded by:
\begin{eqnarray} 
B(t) &\le& \min \left \{(\sum_n r_n) \sum_n \frac{\sigma_n}{C_n} + \sum_n \sigma_n, \right. \nonumber \\
&& \left. C_{max} \cdot \sum_n \frac{\sigma_n}{C_n} + \rho C_{max} \cdot \max_{n}\frac{L_n}{C_n} \right \}. \nonumber %\cdot (\sum_n \frac{r_n}{C_n})
\end{eqnarray} 
\end{theorem}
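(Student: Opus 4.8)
The plan is to establish the two quantities inside the $\min\{\cdot,\cdot\}$ as separate upper bounds on $B(t)$; since both hold for every $t$, their minimum does as well. The first bound I would obtain from the delay bound of Theorem~\ref{th-db}, and the second from a workload argument measured in service time rather than in bits.

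For the first term, write $D^{*}=\sum_n \sigma_n/C_n$. Because service is FIFO, any packet still present at time $t$ departs after $t$, so its delay exceeds $t-a$, where $a$ is its arrival time; combined with the bound $D^{f_n,i}\le D^{*}$ from Theorem~\ref{th-db}, this forces $a> t-D^{*}$. Hence every backlogged packet arrived during the window $(t-D^{*},t]$, so $B(t)\le\sum_n A_n(t-D^{*},t)$. Applying the leaky-bucket arrival curves $A_n(t-D^{*},t)\le r_n D^{*}+\sigma_n$ and summing gives $B(t)\le(\sum_n r_n)D^{*}+\sum_n\sigma_n$, which is the first term (the corner case $t<D^{*}$ is handled by using the window $[0,t]$, which only lowers the arrivals).

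For the second term, let $t_0$ be the start of the busy period containing $t$, so the server is continuously busy on $(t_0,t]$ and $B(t)=A(t_0,t)-A^{*}(t_0,t)$. Let $\theta$ be the service time already spent by time $t$ on the packet currently in service, and let $C_{n(p)}$ denote the rate of the class of packet $p$. Conservation of service time over the busy period gives the identity $\sum_{p\ \mathrm{in\ system}} l_p/C_{n(p)} = \sum_n A_n(t_0,t)/C_n-(t-t_0)+\theta$, since the total busy time $t-t_0$ equals the service time of the departed packets plus $\theta$. Converting each in-system packet back to bits via $l_p=C_{n(p)}\cdot l_p/C_{n(p)}\le C_{max}\cdot l_p/C_{n(p)}$ yields $B(t)\le C_{max}\,[\sum_n A_n(t_0,t)/C_n-(t-t_0)+\theta]$, and the arrival curves bound $\sum_n A_n(t_0,t)/C_n\le\rho(t-t_0)+\sum_n\sigma_n/C_n$, so $B(t)\le C_{max}\,[(\rho-1)(t-t_0)+\sum_n\sigma_n/C_n+\theta]$.

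The main obstacle, and the step that produces the factor $\rho$ instead of a crude $1$, is how the two non-negative contributions are controlled. Naively discarding $(\rho-1)(t-t_0)\le0$ and bounding $\theta\le\max_n L_n/C_n$ separately only gives the weaker granularity term $C_{max}\max_n L_n/C_n$. Instead I would use that the in-service packet began service no earlier than $t_0$, so $\theta\le t-t_0$; since $\rho-1\le0$, this lets me replace $t-t_0$ by $\theta$ in the negative term, collapsing $(\rho-1)(t-t_0)+\theta$ into $\rho\,\theta$. Bounding $\theta<l_q/C_{n(q)}\le\max_n L_n/C_n$ then gives $B(t)\le C_{max}\sum_n\sigma_n/C_n+\rho\,C_{max}\max_n L_n/C_n$, the second term. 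The delicate points are getting the service-time conservation identity right at packet (rather than fluid) granularity and handling the single packet in service correctly, which is exactly what separates this bound from the looser fluid estimate.
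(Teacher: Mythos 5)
Your proof is correct and both terms match the paper's, but your route for the second term is genuinely more self-contained than the paper's. For the first term you and the paper do the same thing in different clothing: the paper uses Theorem~\ref{th-db} to assert the aggregate receives the pure-delay service curve $\delta_{D^*}$ with $D^*=\sum_n \sigma_n/C_n$ and then cites the standard network-calculus bound $B(t)\le \sup_s\{\alpha(s)-\beta(s)\}=\alpha(D^*)$ (the paper's text says ``delay bound result'' but clearly means the backlog bound); your window argument $B(t)\le A(t-D^*,t)\le (\sum_n r_n)D^*+\sum_n\sigma_n$ is precisely the elementary proof of that cited result, valid here because, by the paper's convention, $B(t)$ is exactly the total length of whole packets arrived but not departed. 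For the second term, both proofs measure work in rate-normalized service-time units and convert back to bits at cost $C_{max}$ --- your step $l_p\le C_{max}\,l_p/C_{n(p)}$ is exactly the paper's $B(t)\le C_{max}\hat{B}(t)$ --- but the execution differs: the paper makes the normalization explicit as a ``mapping method,'' observing via the recursion (\ref{eq-vtf2}) that the system is equivalent to a unit-rate FIFO with packet lengths $l^{g,j}/C_{*}$, arrival curve $\rho t+\sum_n\sigma_n/C_n$ and service curve $(t-\max_n L_n/C_n)^{+}$, and then cites the literature backlog bound $\hat{B}(t)\le \sum_n\sigma_n/C_n+\rho\max_n L_n/C_n$; you instead re-derive that bound from scratch by workload conservation over the busy period, and your collapse of $(\rho-1)(t-t_0)+\theta$ into $\rho\,\theta$ using $\theta\le t-t_0$ and $\rho\le 1$ is exactly the computation hidden inside the cited bound, where the supremum of $\rho s+\hat{\sigma}-(s-T)^{+}$ is attained at $s=T$. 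What the paper's route buys is reusability --- the reference-system equivalence packages the whole dynamics, not just the backlog, and can feed other network-calculus results; what your route buys is transparency at packet granularity, since the conservation identity $\sum_p l_p/C_{n(p)}=\sum_n A_n(t_0,t)/C_n-(t-t_0)+\theta$ and the careful handling of the single in-service packet are stated and verified explicitly rather than delegated to the reference system's service-curve characterization. Both are sound; yours would serve as a citation-free proof of the same theorem.
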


\begin{theorem}\label{th-gr}
(iii) {\bf Guaranteed Rate to Each Traffic Class:} If  $\bar{\rho} \equiv \sum_{m \neq n}\frac{r_m}{C_m} < 1$, the system is a guaranteed rate server to class $n$ with rate $R_n=(1-\bar{\rho}) C_n$ and error term 
$$
E_n = \sum_{m \neq n}\frac{\sigma_m}{(1-\bar{\rho})C_m}.
$$
\end{theorem}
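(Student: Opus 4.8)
The plan is to pass to a rescaled \emph{service-time} domain, in which every class-$m$ bit is charged $1/C_m$ units of work, so that the work-conserving multiclass FIFO server drains its total (service-time) backlog at the constant unit rate whenever it is busy. In this domain the per-packet recursion~(\ref{eq-vtf}) becomes an ordinary rate-one FIFO recursion, and the leaky-bucket constraint on class $m$ becomes an arrival curve $\frac{r_m}{C_m}(t-s)+\frac{\sigma_m}{C_m}$. The whole argument then reduces to a single busy-period accounting for class $n$ against the aggregated cross-traffic of the remaining classes, whose total service-time arrival curve is $\bar{\rho}\,(t-s)+S$ with $S=\sum_{m\ne n}\frac{\sigma_m}{C_m}$, so that $E_n=\frac{S}{1-\bar{\rho}}$.

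First I would fix a packet $p^{f_n,i}$ and let $t_0$ be the start of the system busy (backlogged) period containing its departure instant $d^{f_n,i}$; let $p^{f_n,k}$ be the first class-$n$ packet served in that busy period, so that $t_0\le a^{f_n,k}$ and, since the system is empty just before $t_0$, the class-$n$ departures by $d^{f_n,i}$ are exactly packets $k,\dots,i$. Over the interval $(t_0,d^{f_n,i}]$ I would record three facts: (a) being continuously backlogged and work-conserving, the system expends service time equal to the elapsed time $d^{f_n,i}-t_0$; (b) by FIFO together with the within-class order, exactly the first $i$ packets of class $n$ have left by $d^{f_n,i}$, so the class-$n$ service time spent is $\frac{1}{C_n}\sum_{j=k}^{i}l^{f_n,j}$; and (c) the service time spent on the other classes cannot exceed the service time those classes \emph{brought in} over the same interval, which the arrival curves bound by $\bar{\rho}\,(d^{f_n,i}-t_0)+S$.

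Combining (a)--(c) gives $d^{f_n,i}-t_0\le \frac{1}{C_n}\sum_{j=k}^{i}l^{f_n,j}+\bar{\rho}\,(d^{f_n,i}-t_0)+S$. The crucial point is that the cross-traffic is charged over the \emph{departure} interval $(t_0,d^{f_n,i}]$ rather than over $(t_0,a^{f_n,i}]$; this places the factor $(1-\bar{\rho})$ in front of $d^{f_n,i}-t_0$ and, because $\bar{\rho}<1$, lets me solve for the departure time,
$$
d^{f_n,i}\le t_0+\frac{1}{(1-\bar{\rho})C_n}\sum_{j=k}^{i}l^{f_n,j}+\frac{S}{1-\bar{\rho}}=t_0+\frac{1}{R_n}\sum_{j=k}^{i}l^{f_n,j}+E_n .
$$
It is exactly this choice of interval that yields the rate $R_n=(1-\bar{\rho})C_n$ in place of the cruder $C_{min}-\sum_{m\ne n}r_m$ of Cor.~\ref{co-gr} (for the two-class example it recovers a rate close to $C_2$ rather than close to $1$).

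To finish I would use $t_0\le a^{f_n,k}$ and unroll the clock recursion~(\ref{eq-grc}): applying $GRC^{f_n,j}(R_n)\ge GRC^{f_n,j-1}(R_n)+\frac{l^{f_n,j}}{R_n}$ for $j=i,\dots,k+1$ and $GRC^{f_n,k}(R_n)\ge a^{f_n,k}+\frac{l^{f_n,k}}{R_n}$ gives $GRC^{f_n,i}(R_n)\ge a^{f_n,k}+\frac{1}{R_n}\sum_{j=k}^{i}l^{f_n,j}$, so that $d^{f_n,i}\le GRC^{f_n,i}(R_n)+E_n$, which is the GR guarantee with the claimed parameters. The main obstacle is the bookkeeping in the busy-period step rather than any single inequality: one must set up the service-time domain cleanly, argue via FIFO and same-class ordering that nothing beyond the first $i$ class-$n$ packets affects $d^{f_n,i}$, and make the genuinely load-bearing choice of charging the cross-traffic over the departure interval so the $(1-\bar{\rho})$ factor emerges. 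An alternative route, which I would also note, is to first establish in the service-time domain the blind-multiplexing leftover service curve $[(1-\bar{\rho})t-S]^{+}$ for class $n$, rescale it to the bit domain as $R_n[t-E_n]^{+}$, and then invoke the service-curve-to-GR relationship (\cite{Jiang03,NetCal}) already used for Cor.~\ref{co-gr}.
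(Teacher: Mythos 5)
Your proof is correct and follows essentially the same route as the paper's: the same busy-period decomposition starting at $t_0$ with the first class-$n$ packet served in that period, the same bound on cross-class service time via the leaky-bucket arrival curves, solving for $d^{f_n,i}$ using $\bar{\rho}<1$, and the same unrolling of the GRC recursion from $t_0 \le a^{f_n,k}$. The only cosmetic difference is that you charge the cross-traffic directly over $(t_0, d^{f_n,i}]$, whereas the paper first bounds it over $(t_0, a^{f_n,i}]$ by FIFO and then relaxes $a^{f_n,i}$ to $d^{f_n,i}$ --- the resulting inequality is identical.
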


\begin{theorem}\label{th-losc}
(iv) {\bf Service Curve to Each Traffic Class:} If  $\bar{\rho} \equiv \sum_{m \neq n}\frac{r_m}{C_m} < 1$, the system provides to class $n$ a service curve 
$$
\beta_n(t) =  [(1-\bar{\rho}) C_n \cdot t - L_n - C_n \cdot \sum_{m \neq n}\frac{\sigma_m}{C_m} ]^{+} . 
$$ %\left \right
%where $L_n$ denotes the maximum packet length of class $n$.
\end{theorem}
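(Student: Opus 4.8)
The plan is to obtain Theorem~\ref{th-losc} as an immediate consequence of the guaranteed-rate characterization already established in Theorem~\ref{th-gr}, by invoking the standard conversion from a GR guarantee to a service curve --- exactly the device used earlier to pass from Lemma~\ref{lm-gra} to Lemma~\ref{lm-sca} for the aggregate input. Concretely, the relationship between GR and service curve \cite{Jiang03, NetCal} states that whenever a system is a GR server to flow $f_n$ with rate $R_n$ and error term $E_n$, it offers to that flow the service curve $\beta_n(t) = (R_n t - R_n E_n - L_n)^{+}$, where the correction term $L_n$ is the maximum packet length of $f_n$ and accounts for packetization. (Taking $R_n = C_{min}$, $E_n = 0$, and maximum packet length $L$, this reproduces $\beta(t)=(C_{min}t - L)^{+}$ of Lemma~\ref{lm-sca}, which is reassuring.)

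First I would import from Theorem~\ref{th-gr} the values $R_n=(1-\bar\rho)C_n$ and $E_n=\sum_{m\neq n}\frac{\sigma_m}{(1-\bar\rho)C_m}$; this is legitimate because the hypothesis $\bar\rho<1$ is identical in both statements. Next I would simplify the latency constant, observing that the factors of $(1-\bar\rho)$ cancel cleanly: $R_n E_n = (1-\bar\rho)C_n \cdot \frac{1}{1-\bar\rho}\sum_{m\neq n}\frac{\sigma_m}{C_m} = C_n \sum_{m\neq n}\frac{\sigma_m}{C_m}$. Substituting $R_n$ and $R_n E_n$ into $\beta_n(t)=(R_n t - R_n E_n - L_n)^{+}$ then yields exactly $\beta_n(t)=[(1-\bar\rho)C_n\, t - L_n - C_n\sum_{m\neq n}\frac{\sigma_m}{C_m}]^{+}$, which is the claimed curve; no further estimation is required.

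The only real content is thus the GR-to-service-curve conversion itself, which I would cite rather than reprove. If a self-contained argument were wanted, the main obstacle would be precisely this step: translating the per-packet departure bound $d^{f_n,i}\le GRC^{f_n,i}(R_n)+E_n$ into a bound on the bit-level cumulative output $A_n^{*}(t)$ that holds for \emph{every} $t$, not merely at packet-departure epochs. The standard route fixes $t$, selects the last class-$n$ packet that has departed by $t$, lower-bounds $A_n^{*}(t)$ by the bits of the packets completed so far, and unrolls the $GRC$ recursion back to the start of the current backlogged period to expose the rate-$R_n$ guarantee; the unavoidable loss of up to one maximum-size packet in this discretization is exactly what produces the $-L_n$ term. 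Some care is also needed with the $(\cdot)^{+}$ truncation and with ties among simultaneous arrivals, but these are routine once the packet-to-fluid passage is set up.
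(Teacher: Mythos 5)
Your proposal is correct and follows exactly the paper's route: the paper proves Theorem~\ref{th-losc} by citing Theorem~\ref{th-gr} together with the GR-to-service-curve relationship of \cite{Jiang03, NetCal}, which is precisely your argument. You additionally spell out the algebra $R_n E_n = C_n \sum_{m \neq n}\frac{\sigma_m}{C_m}$ and sketch how the conversion lemma itself is proved, which the paper leaves implicit, but this is elaboration rather than a different approach.
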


\subsection{Proofs}

\subsubsection{Proof of Theorem \ref{th-db}}
%\begin{proof}
For any packet $p^{g, j}$, there exists time $t_0$ that starts the busy period of which the packet is in. Note that such a busy period always exists, since in the extreme case, the period is only the service time period of $p^{g,j}$ and in this case, $t^0=a^{g,j}$. 

Since the system is work-conserving and is busy with serving between $t^0$ and $d^{g,j}$, there holds:
%\begin{equation}\label{eq-2}
$
d^{g,j} = t^0 + \sum_{m=1}^{N}T_m(t^0, d^{g,j}),
$
%\end{equation}
where $T_m(t^0, d^{g,j})$ denotes the service time of packets from class $m$, served in $[t^0, d^{g,j}]$. 
Because of FIFO and that the system is empty at $t^0_{-}$, $T_m(t^0, d^{g,j})$ is hence limited by the amount of traffic that arrives in $[t^0, a^{f_n,i}]$: $T_m(t^0, d^{g,j}) \le \frac{A_m(t^0, a^{g,j})}{C_m}$. 

We then have, 
\begin{equation}\label{eq-2}
d^{g,j} \le t^0 + \sum_{m=1}^{N}\frac{A_m(t^0, a^{g,j})}{C_m}
\end{equation}

Under the condition that $\sum_{m=1}^{N}\frac{r_m}{C_m} \le 1$, we obtain:
\begin{eqnarray}\label{eq-2}
D^{g,j} &=& d^{g,j} - a^{g,j} \le \sum_{m=1}^{N}\frac{A_m(t^0, a^{g,j})}{C_m} + t_0-a^{g,j} \nonumber \\
&\le& \sum_{m=1}^{N}\frac{r_m(a^{g,j}-t^0) + \sigma_m}{C_m} - (a^{g,j}- t_0) \\
% &\le& \sum_{m=1}^{N}\frac{A_m(t^0, a^{g,j})-r_n(a^{g, j} - t_0)}{C_m} +  (a^{f_n} - t_0)\left[\sum_{m=1}^{N}\frac{\rho^{f_m}}{C_m} - 1 \right]\\
%&\le&  \sum_{m=1}^{N} \frac{ \sup_{0 \le s\le a^{f_n,i}}\{  A^{f_m}(s, a^{f_n,i})-\rho^{f_n}(a^{f_n} - s)\}}{C_m} \\
&\le&  \sum_{m=1}^{N} \frac{ \sigma^{f_m}}{C_m}.
\end{eqnarray}
%where the second last step is due to $A_m(t^0, a^{g,j}) \le r_m(a^{g,j}-t^0)$. 

Note that, for the system, consider that immidiately after time $0$, every traffic class generates a burst with size $\sigma_m$. In this case, the packet in the bursts, which receives service last, will experience delay $\sum_{m=1}^{N} \frac{ \sigma^{f_m}}{C_m}$ that equals the delay bound. So, the bound is tight. 
\qed
%\end{proof}

\subsubsection{Proof of Theorem \ref{th-bb}}

The backlog bound has two parts. 

The first part follows from Th. \ref{th-db} that the system provides to the aggregate a service curve $\beta(t) = \delta_{\sum_n \frac{\sigma_n}{C_n}}(t)$ \cite{NetCal}, where $\delta_D(t) = 0$ if $0 \le t \le D$ and $\delta_D(t) = \infty$ when $t > D$. With this and that 
%\begin{proof}
the aggregate input is constrained by an arrival curve as:
$
A(t) \le (\sum_n r_n) \cdot t + \sum_n \sigma_n \equiv \alpha(t).
$
Then, the first part follows from the delay bound result of the deterministic network calculus \cite{NetCal}. 

To obtain the second part, we propose the following mapping method. Recall that, the system behavior is essentially determined by (\ref{eq-vtf}), which can be re-written as 
\begin{eqnarray}
d^{g,j} &=& \max \{a^{g,j}, d^{g,j-1} \} + \hat{l}^{g,j} \label{eq-vtf2} %\frac{\hat{l}^{g,j}}{1}
\end{eqnarray}
where $\hat{l}^{g,j} \equiv \frac{l^{g,j}}{C_{*}}$ and $C_{*}$ is the service rate of the class of $p^{g,j}$. 

Equation (\ref{eq-vtf2}) suggests that the multiclass FIFO system is equivalent to a reference FIFO system that has constant unit service rate for every packet whose length is equal to that in the original multiclass FIFO system normalized by the service rate of its corresponding class. Under this normalization, it can be verified that: $\hat{A}_n(s,t) = \frac{A_n(s,t)}{C_n}$, and $\hat{A}^{*}_n(s,t) = \frac{A^{*}_n(s,t)}{C_n}$ for all $n, s, t$, and $\hat{B}(t) = \sum_{n}\frac{B_n(t)}{C_n}$, where $\hat{A}_n(s,t)$ and $\hat{A}^{*}_n(s,t)$ respectively denote the (normalized) arrival and departure processes in the reference FIFO system and $\hat{B}(t)$ the backlog at time $t$ in the reference FIFO system. 

Since the reference FIFO system has constant unit service rate with maximum packet length $\max_{n}\frac{L_n}{C_n}$, it has a service curve as $\hat{\beta}(t)=(t-\max_{n}\frac{L_n}{C_n})^{+}$. Since its input is constrained by a leaky-bucket arrival curve as $\hat{A}(s,t) = \sum_{n}\hat{A}_n(s,t) \le \sum_{n}\frac{r_n}{C_n} (t-s) + \sum_{n}\frac{\sigma_n}{C_n}$, the literature deterministic network calculus results give \cite{NetCal} $\hat{B}(t) \le \sum_{n}\frac{\sigma_n}{C_n}+ \sum_{n}\frac{r_n}{C_n}\max_{n}\frac{L_n}{C_n}$. The second part is finally obtained from $B(t) = C_{max} \sum_{n}\frac{B_n(t)}{C_{max}} \le C_{max} \sum_{n}\frac{B_n(t)}{C_{n}} = C_{max}\cdot \hat{B}(t)$.
\qed

\subsubsection{Proof of Theorem \ref{th-gr}}

For any packet $p^{f_n,i}$ of class $n$, we suppose its departure time $d^{f_n,i}$ is within the busy period that starts at $t^0$ and suppose $p^{f_n, i_0}$ is the first class $n$ packet served in this busy period. By these definitions, $a^{f,i_0} \ge t^0 .$ Note that this busy period always exists, since in the worst case, the period is only the service time period of $p^{f_n,i}$ and in this case, $t^0=a^{f_n,i}$ and $i_0=i$. %This implies $$a^{f,i_0} \ge t^0 .$$ 

Since the busy period starts at $t^0$, this implies that immediately before $t^0$, the system is idle. In other words, all packets, which arrived before $t^0$, have been served by $t^0$. Additionally since the node is work-conserving, FIFO and busy with serving between $t^0$ and $d^{g,j}$, there holds:
\begin{equation}\label{eq-3a}
d^{f_n,i} \le t^0 + \frac{\sum_{k=i_0}^{i} l^{f_n,k}}{C_n} + \sum_{m \neq n} \frac{A_{m}(t^0, a^{f_n,i})}{C_m},
\end{equation}
where, by definition, $A_{m}(t^0, a^{f_n,i})$ represents the total length (in bits) of packets from another flows $m \neq n$, which can maximally been served in $[t^0, a^{f_n,i}]$ before packet $p^{f_n, i}$. We then have:
\begin{eqnarray}
d^{f_n,i} &\le& t^0 + \frac{\sum_{k=i_0}^{i} l^{f_n,k}}{C_n} + \sum_{m \neq n} \frac{r_{m}(a^{f_n,i}-t^0) + \sigma_{m}}{C_m} \nonumber\\
&=& t_0 + \frac{\sum_{k=i_0}^{i} l^{f_n,k}}{C_n} + \sum_{m \neq n} \frac{\sigma_{m}}{C_m} + \sum_{m \neq n} \frac{r_{m}}{C_m} (a^{f_n,i}-t^0) \nonumber\\
&\le& t_0 + \frac{\sum_{k=i_0}^{i} l^{f_n,k}}{C_n} + \sum_{m \neq n} \frac{\sigma_{m}}{C_m} + \sum_{m \neq n} \frac{r_{m}}{C_m} (d^{f_n,i}-t^0) \nonumber\\
&\le& t_0 + \frac{\sum_{k=i_0}^{i} l^{f_n,k}}{(1-\bar{\rho}) C_n} + \sum_{m \neq n} \frac{\sigma_{m}}{(1-\bar{\rho})C_m} \nonumber%\\
\end{eqnarray}
where $\bar{\rho} = \sum_{m \neq n} \frac{r_{m}}{C_m}$ and  the last step follows from some manipulation together with the left hand side.

Recall function (\ref{eq-grc}), it is easily verified that: %for the considered packet $p^{f_n,i}$, we have
\begin{eqnarray}
V^{f_n,i}((1-\bar{\rho}) C_n) &\ge& a^{f_n,i_0} + \frac{\sum_{k=i_0}^{i} l^{f,k}}{(1-\bar{\rho})C_n} \ge t_0 + \frac{\sum_{k=i_0}^{i} l^{f,k}}{(1-\bar{\rho})C_n} \nonumber %\label{eq-8a}
\end{eqnarray}
Hence
$$
d^{f_n,i} \le V^{f,i}((1-\bar{\rho}) C_n) + \sum_{m \neq n} \frac{\sigma_{m}}{(1-\bar{\rho})C_m}.
$$
This ends the proof.
%\end{proof}
\qed

\subsubsection{Proof of Theorem \ref{th-losc}}
With Th.~\ref{th-gr}, Th.~\ref{th-losc} follows immediately from the relationship between GR and service curve \cite{Jiang03, NetCal}.

\subsection{Comparison}\label{sec-4b}
Comparing with Cor.~\ref{co-db} -- Cor.~\ref{co-losc}, Th.~\ref{th-db} -- Th.~\ref{th-losc} provide performance bounds better in two aspects. One is that the conditions for the latter are much relaxed. The new condition for the delay and backlog bounds, i.e. $\rho <1$, is indeed consistent with the multiclass FIFO stability condition found from the classic queueing theory (e.g. \cite{ChenZ97}). 

Another is that, the obtained bounds are also generally better. Specifically, the delay bound in Th. \ref{th-db} is tight in the sense that it may be reached by a packet as shown in the proof. In other words, there is no delay bound better than Th. \ref{th-db}, which may be found. In addition, it can be verified that the service curve obtained in Th.~\ref{th-losc} is always tighter than that in Cor.~\ref{co-losc}. Also, the same conclusion can be made for the guaranteed rate characterization. 

For backlog bound, the comparison is not straightforward. When $C_1 = \cdots = C_N$, it is easily verified that the bound in Th. \ref{th-bb} may only be slightly better due to the difference between $\max_n\frac{L_n}{C_n}$ and $\frac{L}{C_{min}}$. In general, when $C_n$ differs little, the bound in Th. \ref{th-bb} is better. However, when $C_n$ differs significantly, the bound in Cor.~\ref{co-bb} may become better but under the condition that $\sum_n r_n \le C_{min}$. If this condition is not met, then the bound in Th. \ref{th-bb} should be used.

\section{Concluding Remarks}\label{sec-5} 
In this paper, we have presented performance bounds for a multiclass FIFO system that has deterministically constrained traffic inputs. The analysis and results can be extended to obtain bounds for other performance metrics of interest, such as burstiness increase. Also, they can be extended to perform network case analysis by exploiting the GR and the service curve results. Moreover, the extension may be conducted for stochastic cases, which is our ongoing work. 

A final remark is that, many results for multiclass FIFO can be found under the classic queueing theory (see e.g. \cite{ChenZ97} and references therein). However, the arrival and service processes assumed in those results do not match with the traffic constraints considered in this paper. In the context of input-queueing in packet switching systems, multiclass FIFO has also been studied (e.g. \cite{Karol87}). However, in those studies, the focus has been on the throughput of the switch, \nop{ which has multiple input ports and multiple output ports, and on each input port, multclass FIFO may be employed, }assuming saturated traffic on each input port. 

%A final remark is that, while the direct results require a highly restrictive condition limiting their applicability, the main results are obtained under a more relaxed condition, i.e. $\rho <1$, which is indeed in consistence with the stability condition found from the classic queueing theory for multiclass FIFO (e.g. \cite{ChenZ97}). 

%making it difficult (if not completely impossible) to apply those results in the present paper. Nevertheless, it is interesting to highlight that the same condition, i.e. $\rho <1$, is found also in those results (e.g. \cite{ChenZ97}). 

\bibliographystyle{abbrv}%{alpha}%{unsrt}%{alpha}
\bibliography{nc-qt}

% Set the ending of a LaTeX document
\end{document}